\newtheorem{thm}{\bf Theorem}
\newtheorem{assumption}{Assumption}
\newtheorem{remark}{Remark}
\title{\bf Semi-Active Control of the Sway Dynamics for Elevator Ropes}
\author{ Mouhacine Benosman
\thanks{Mouhacine Benosman (m{\_}benosman@ieee.org) is with Mitsubishi Electric Research Laboratories, 201 Broadway Street, Cambridge, MA 02139, USA.}
 }
\date{}
\begin{document}
\maketitle\thispagestyle{empty}
\begin{abstract}
In this work we study the problem of rope sway dynamics control
for elevator systems. We choose to actuate the system with a
semi-active damper mounted on the top of the elevator car. We
propose nonlinear controllers based on Lyapunov theory, to actuate
the semi-active damper and stabilize the rope sway dynamics. We
study the stability of the proposed controllers, and test their
performances on a numerical example.
\end{abstract}
\section{Introduction}
Modern elevators installed in high-rise buildings are required to
travel fast and ensure comfort and safety for the passengers.
Unfortunately, the dimensions of such high-rise buildings make
them more susceptible to the impact of bad weather conditions.
Indeed, when an external disturbances, like wind gust or
earthquake, hits a building it can lead to large rope sway
amplitude within the elevator shaft. Large amplitudes of rope sway
might lead to important damages to the equipments that are
installed in the elevator shaft and to the
 elevator shaft structure itself, without mentioning the
 potential danger caused for the elevator passengers. It is important  then to be able to control the elevator system and damp out these
 ropes oscillations. However,
 due to cost constraints, it is preferable to be able to do so,
 with minimum actuation capabilities. Many investigations have been
 dedicated to the problem of modelling and control of elevator ropes
 \cite{K011,KIT09,ZX03,ZT03,ZC06,FWP93,B14}. In \cite{ZT03}, a scaled model for
 high-rise, high-speed elevators was developed. The model was used
 to analyze the influence of the car motion profiles on the
 lateral vibrations of the elevator cables. The author in \cite{K011} proposed a nonlinear modal
 feedback to drive an actuator pulling on one end of the rope.
 The control performance was investigated by numerical tests. In \cite{FWP93},
 a simple model of a cable attached to an actuator at its free
 end was used to investigate the stiffening effect of the control
 force on the cable. An off-line energy analysis was used to tune an open-loop sinusoidal force
 applied to the cable. In \cite{ZC06}, the authors proposed a novel idea to dissipate the
 transversal energy of an elevator rope. The authors used a passive
 damper attached between the car and the rope. Numerical analysis
 of the transverse motion average energy was conducted to find the optimal
 value of the damper coefficient. In \cite{B14,B14_1,B14_2}, the present author studied the problem of an elevator system using a force actuator pulling on the ropes to add control tension to the ropes. The
 force actuator was controlled based on Lyapunov theory. Although, the sway damping performances obtained in \cite{B14,B14_1,B14_2} were satisfactory, we decided
 to investigate other forms of actuation and control. Indeed, in
 \cite{B14,B14_1,B14_2} we proposed a force control algorithm to modulate the
 ropes tensions, using an external force actuator, which was
 introduced at the bottom of the elevator shaft to pull on the
 ropes. However, retrofitting existing elevators with such a
 cumbersome actuator, can prove to be challenging and expensive.
 Instead, we propose to investigate here another actuation method,
 namely, using semi-active dampers mounted between the elevator
 car and the ropes. These actuators are less cumbersome than the
 typical force actuators and thus can be easier/cheaper to
 install.
This type of actuation has already been proposed in  \cite{ZC06}.
The difference between the present work and
 \cite{ZC06}, is that instead of using a static damper with constant damping coefficient tuned off-line, we use a
 semi-active damper, and use nonlinear control theory to design a
 feedback controller to compute online the desired time-varying damping
 coefficient that reduces the rope sway. We study the stability of the closed-loop dynamics, and show the
 performances of these controllers on a numerical example. One
 more noticeable difference with the work in \cite{B14}, is that
 due to the semi-active actuation, the model of the actuated system is
 quite different from the one presented in \cite{B14}. Indeed, in
 \cite{B14} the model of the actuated system exhibited terms where
 the control variable was multiplied by the sway position variable
 only. In this work, the actuated model, exhibits terms where the
 control variable, i.e. the semi-active damper coefficient, is
 multiplied both by the sway position and the sway velocity
 variables, furthermore, the control variable term appears in the right hand
 side of the dynamical equations of the system, i.e. as part of an
 external disturbance on the system (refer to Section
 \ref{section1}). These differences in the model, make the
 controller design and analysis more challenging than in
 \cite{B14}.
 \\\\The paper is organized as follows: In Section
 \ref{section1}, we recall the model of the system when actuated with a semi-active damper mounted between the ropes and the elevator car.
  Next, in Section \ref{section2}, we present the main results of this work, namely, the nonlinear
 Lyapunov-based semi-active damper controllers, together with their stability
 analysis. Section \ref{section3} is dedicated to some numerical
 results. Finally, we conclude the paper with a brief summary of the results in Section
 \ref{section4}.\\
Throughout the paper, $\mathbb{R}$, $\mathbb{R}_{+}$ denotes the
set of real, and the set of nonnegative real numbers,
respectively. For $x\in \mathbb{R}^{N}$ we define
 $|x|=\sqrt{x^{T}x}$, and we denote by $A_{ij},\;i=1,...,n,\;j=1,...,m$ the elements of the matrix $A$.
\section{Elevator Rope Modelling}\label{section1} In this section we first
recall the infinite dimension model, i.e partial differential
equation (PDE), of a moving hoist cable, with non-homogenous
boundary conditions. Secondly, to be able to reduce the PDE model
to an ODE model using a Galerkin reduction method, we introduce a
change of variables and re-write the first PDE model in a new
coordinates, where the new PDE model has zero boundary conditions.
Let us first enumerate the assumptions under which our model is
valid: 1) The elevator ropes are modelled within the framework of
string theory, 2) The elevator car is modelled as a point mass, 3)
The vibration in the second lateral direction is not included, 4)
The suspension of the car against its guide rails is assumed to be
rigid, 5) The mass of the semi-active damper is considered to be
negligible compared to the elevator car mass.

\begin{figure}
\vspace{-0.5cm}
  \begin{center}
\hspace{-0.8cm}\vspace{-0.5cm}
\epsfig{figure=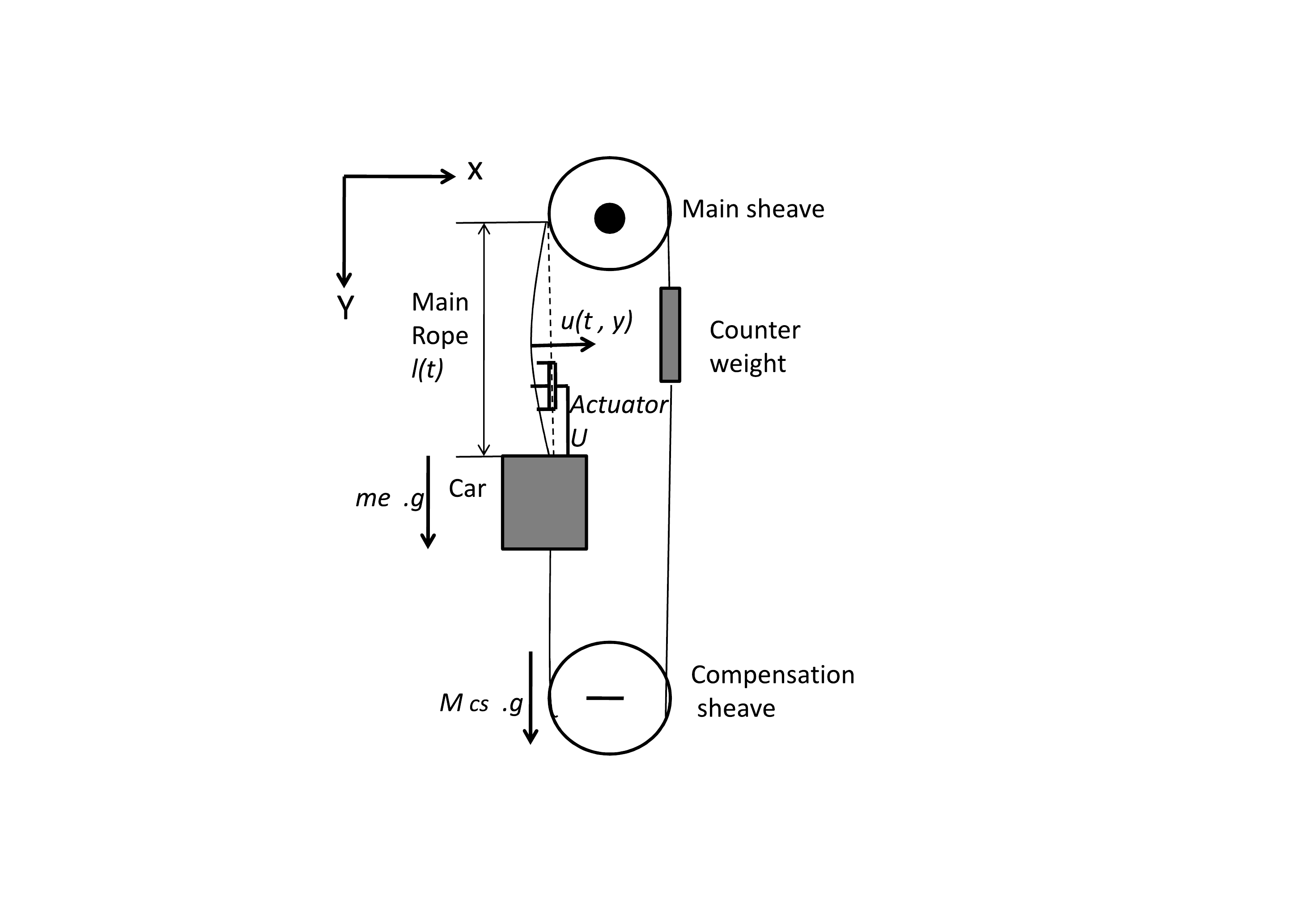,height=7cm,width=10cm}
\caption{Schematic representation of an elevator shaft showing the
different variables used in the model}

    \label{fig1}
 \end{center}
 \end{figure}
Under the previous assumption, following \cite{ZX03,K011} and
\cite{ZC06}, the general PDE model of an elevator rope, depicted
on Figure \ref{fig1}, is given by
\begin{equation}\label{PDE1}
\begin{array}{l}
\rho(\frac{\partial^2}{\partial
t^2}+v^{2}(t)\frac{\partial^2}{\partial
y^2}+2v(t)\frac{\partial}{\partial y\partial
t}+a\frac{\partial}{\partial
y}\big)u(y,t)\\-\frac{\partial}{\partial y}T(y,t)\frac{\partial
u(y,t)}{\partial y}+c_{p}\big(\frac{\partial}{\partial
t}+v(t)\frac{\partial}{\partial
y}\big)u(y,t)\\+k_{dp}\big(\frac{\partial}{\partial
t}+v(t)\frac{\partial}{\partial y}\big)u(y,t)\delta(y-l+l_{dp})=0
\end{array}
\end{equation}
where $u(y,t)$ is the lateral displacement of the rope. $\rho$ is
the mass of the rope per unit length. $T$ is the tension in the
rope, which varies depending on which rope in the elevator system
we are modelling, i.e. main rope, compensation rope, etc. $c_{p}$
is the damping coefficient of the rope per unit length.
$v=\frac{\partial l(t)}{\partial t}$ is the elevator rope
velocity, where $l:\;\mathbb{R}\rightarrow\mathbb{R}$ is a
function (at least $C^{2}$ ) modelling the time-varying rope
length. $a=\frac{\partial^2 l(t)}{\partial t^2}$ is the elevator
rope acceleration. The last term in the righthand-side of
(\ref{PDE1}) has been added to model the semi-active linear damper
effect on the rope at the contact point $\delta(y-l+l_{dp})$,
where $\delta$ is the Dirac impulse function, $k_{dp}$ is the
damping coefficient (the control variable) and $l_{dp}$ is the
distance between the car-top and the
point of attach of the damper to the controlled rope, e.g. \cite{ZC06}.\\
The PDE (\ref{PDE1}) is associated with the following two boundary
conditions:
\begin{equation}\label{BC1}
\begin{array}{l}
u(0,t)=f_{1}(t)\\
u(l(t),t)=f_{2}(t)
\end{array}
\end{equation}
where $f_1(t)$ is the time varying disturbance acting on the rope
at the level of the machine room, due to external disturbances,
e.g. wind gust. $f_2(t)$ is the time varying disturbance acting at
the level of the car, due to external disturbances. In this work
we assume that the two boundary disturbances acting on the rope
are related via the relation:
\begin{equation}\label{BC_relation}
f_{2}(t)=f_{1}(t)sin\big(\frac{\pi(H-l)}{2H}\big),\;H\in\mathbb{R}
\end{equation}
where $H$ is the height of the building. This expression is an
approximation of the propagation of the boundary disturbance
$f_{1}$ along the building structure, based on the length $l$, it
leads to $f_{2}=f_{1}$ for a length $0$ (which is expected), and a
decreasing force along the building until is vanishes at $l=H$,
$f_{2}=0$ (which makes sense, since the effect of any disturbance
$f_{1}$, for example wind gusts, is expected to vanish at the
bottom of the building). As we mentioned earlier the tension of
the rope $T(y)$ depends on the type of the rope that we are
dealing with. In the sequel, we concentrate on the main rope of
the elevator, the remaining ropes are modelled using the same
steps by simply changing the rope
tension expression.\\
For the case of the main rope, the tension is given by
\begin{equation}\label{tension1}
T(y,t)=(m_e+\rho(l(t)-y))(g-a(t))+0.5 M_{cs}g
\end{equation}
where $g$ is the standard gravity constant, $m_e, M_{cs}$ are the
mass of the car and the compensating sheave, respectively.
Next, we reduce the PDE model (\ref{PDE1}) to a more tractable
model for control, using a projection Galerkin method or assumed
mode approach, e.g. \cite{M67,HLB96}.\\To be able to apply the
assumed mode approach, let us first apply the following one-to-one
change of coordinates to the equation (\ref{PDE1})
\begin{equation}\label{change_of_coordinates}
u(y,t)=w(y,t)+\frac{l(t)-y}{l(t)}f_{1}(t)+\frac{y}{l(t)}f_{2}(t)
\end{equation}
One can easily see that this change of coordinates implies trivial
boundary conditions
\begin{equation}\label{BC2}
\begin{array}{l}
w(0,t)=0\\
w(l(t),t)=0
\end{array}
\end{equation}
After some algebraic and integral manipulations, the PDE model
(\ref{PDE1}) writes in the new coordinates as
\begin{equation}\label{PDE3}
\begin{array}{c}
\rho\frac{\partial ^2 w}{\partial t^2}+2v(t)\rho\frac{\partial^2
w}{\partial y\partial t}+\left(\rho
v^2-T(y,t)\right)\frac{\partial^2 w}{\partial
y^2}\\+(G(t)+vk_{dp}\delta(x-l+l_{dp}))\frac{\partial w}{\partial
y}\\+(c_{p}+k_{dp}\delta(x-l+l_{dp}))\frac{\partial w}{\partial
t}\\=y\left(-\rho s_{1}(t)-c_{p} s_{2}(t)\right)-\rho
f^{(2)}_{1}+s_{4}(t)\\-\frac{\partial w}{\partial
t}\big(\frac{l-y}{l}f_{1}(t)+\frac{y}{l}f_{2}(t)\big)k_{dp}\delta(x-l+l_{dp})\\-vk_{dp}\frac{f_{2}-f_{1}}{l}\delta(x-l+l_{dp})
\end{array}
\end{equation}
where $ G(t)=\rho a(t)-\frac{\partial T}{\partial y}+c_{p} v(t)$,
and the $s_{i}$ variables are defined as
\begin{equation}\label{S_i}
\begin{array}{l}
s_{1}(t)=\frac{ll^{(2)}-2\dot{l}^{2}}{l^{3}}f_{1}(t)+2\frac{\dot{l}}{l^{2}}\dot{f}_{1}\\+\frac{(l^{3}f^{(2)}_{2}-f_{2}l^{2}l^{(2)}+2l\dot{l}^{2}f_{2}-2l^{2}\dot{l}\dot{f}_{2})}{l^4}-\frac{f^{(2)}_{1}}{l}\\
s_{2}(t)=\frac{\dot{l}}{l^{2}}f_{1}-\frac{\dot{f}_{1}}{l}+\frac{\dot{f}_{2}}{l}-f_{2}\frac{\dot{l}}{l^{2}}\\
s_{3}(t)=\frac{f_{2}-f_{1}}{l}\\
s_{4}(t)=-2 v(t)\rho s_{2}(t)-G(t)s_{3}(t)-c_{p}\dot{f}_{1}(t)
\end{array}
\end{equation}

 associated with the two-point boundary conditions
\begin{equation}\label{BC3}
w(0,t)=0,\;\;w(l(t),t)=0.
\end{equation}
Now instead of dealing with the PDE (\ref{PDE1}) with non-zero
boundary conditions, we can use the equivalent model, given by
equation (\ref{PDE3}) associated
with trivial boundary conditions (\ref{BC3}).\\
 Following the assumed-modes technique, the solution of the equation
 (\ref{PDE3}), (\ref{BC3}) writes as
 \begin{equation}\label{discretization_1}
w(y,t)=\sum_{j=1}^{j=N}q_{j}(t)\phi_{j}(y,t),\;\;N\in\mathbb{N}
 \end{equation}
 where $N$ is the number of bases (modes), included in the
 discretization, $\phi_{j},\;j=1,...,N$ are the discretization
 bases and $q_{j},\;\;j=1,..,N$ are the discretization
 coordinates. In order to simplify the analytic manipulation of
 the equations, the base functions are chosen to satisfy the
 following normalization constraints
 \small\begin{equation}\label{normalization_1}
\int_{0}^{l(t)}\phi_{j}^{2}(y,t)dy=1,\;
\int_{0}^{l(t)}\phi_{i}(y,t)\phi_{j}(y,t)dy=0,\;\;\forall i\neq j
 \end{equation}
 To further simplify the base functions, we define the normalized
 variable, e.g. \cite{ZC06,ZX03}
$\xi(t)=\frac{y(t)}{l(t)}$,
 and the normalized base functions
$\phi_{j}(y,t)=\frac{\psi_{j}(\xi)}{\sqrt{l(t)}},\;\;j=1,...,N$.
 In these new coordinates the normalization constraints write as
$\int_{0}^{1}\psi_{j}^{2}(\xi)d\xi=1,\;
\int_{0}^{1}\psi_{i}(\xi)\psi_{j}(\xi)d\xi=0,\;\;\forall i\neq j$.
After discretization of the PDE-based model (\ref{PDE3}),
(\ref{S_i}) and (\ref{BC3}) (e.g. refer to \cite{ZX03}), we can
write the reduced ODE-model based on $N$-modes as
\begin{equation}\label{ODE_model}
M\ddot{q}+(C+\tilde{C}U)\dot{q}+(K+\tilde{K}
U)q=F(t)+\tilde{F}(t)U,\;q\in\mathbb{R}^{N}
\end{equation}
where
\begin{equation}\label{ODE_model_coefficients}
\begin{array}{l}
{\scriptstyle U=k_{dp}}\\
{\scriptstyle{M_{ij}=\rho\delta_{ij}}}\\
{\scriptstyle{C_{ij}=\rho l^{-1}\dot{l}\left(2\int_{0}^{1}(1-\xi)\psi_{i}(\xi)\psi^{'}_{j}(\xi)d\xi-\delta_{ij}\right)+c_{p}\delta_{ij}}}\\
{\scriptstyle \tilde{C}_{ij}=l^{-1}\psi_{i}(\frac{l-l_{dp}}{l}) \psi_{j}(\frac{l-l_{dp}}{l})}\\
{\scriptstyle{K_{ij}=\frac{1}{4}\rho
l^{-2}\dot{l}^{2}\delta_{ij}-\rho
l^{-2}\dot{l}^{2}\int_{0}^{1}(1-\xi)^{2}\psi^{'}_{i}(\xi)\psi^{'}_{j}(\xi)d\xi}}\\{\scriptstyle{+\rho
l^{-1}(g-a(t))\int_{0}^{1}(1-\xi)\psi^{'}_{i}(\xi)\psi^{'}_{j}(\xi)d\xi+}}
{\scriptstyle{m_{e}l^{-2}(g-a(t))\int_{0}^{1}\psi^{'}_{i}(\xi)\psi^{'}_{j}(\xi)d\xi}}\\{\scriptstyle{+\rho(l^{-2}\dot{l}^{2}-l^{-1}\ddot{l})\left(0.5\delta_{ij}
-\int_{0}^{1}(1-\xi)\psi_{i}(\xi)\psi^{'}_{j}(\xi)d\xi\right)-}}\\
{\scriptstyle{c_{p}\dot{l}l^{-1}(\int_{0}^{1}\psi_{i}(\xi)\psi^{'}_{j}(\xi)\xi
d\xi+0.5\delta_{ij})+0.5M_{cs}gl^{-2}\int_{0}^{1}\psi^{'}_{i}(\xi)\psi^{'}_{j}(\xi)d\xi}}\\
{\scriptstyle \tilde{k}_{ij}=l^{-2}\dot{l}\big(-\psi^{'}_{j}(\frac{l-l_{dp}}{l})\psi_{i}(\frac{l-l_{dp}}{l})(\frac{l-l_{dp}}{l})-0.5\psi_{i}(\frac{l-l_{dp}}{l})\psi_{j}(\frac{l-l_{dp}}{l})}\\
{\scriptstyle+\psi^{'}_{j}(\frac{l-l_{dp}}{l})\psi_{i}(\frac{l-l_{dp}}{l})\big)}\\
{\scriptstyle{ F_{i}(t)=-l\sqrt{l}\left(\rho
s_{1}(t)+c_{p}s_{2}(t)\right)\int_{0}^{1}\psi_{i}(\xi)\xi
d\xi}}\\{\scriptstyle{+\sqrt{l}\left(s_{4}(t)-\rho
f^{(2)}_{1}(t)\right)\int_{0}^{1}\psi_{i}(\xi)d\xi}}\\
{\scriptstyle
\tilde{F}_{i}(t)=\frac{\dot{f}_{1}}{\sqrt{l}}\psi_{i}(\frac{l-l_{dp}}{l})+\frac{l-l_{dp}}{\sqrt{l}}\psi_{i}(\frac{l-l_{dp}}{\sqrt{l}})(\dot{l}l^{-2}(f_{1}-f_{2})+l^{-1}(\dot{f}_{2}-\dot{f}_{1}))}
\\
\delta_{ij}=\left\{\begin{array}{l}0,\;i\neq
j\\1,\;i=j\end{array}\right.
\end{array}
\end{equation}
where $i,j\in\{1,...,N\}$, and $s_{k},\;k=1,2,3,4$ are given in
(\ref{S_i}).
\begin{remark}
The model (\ref{ODE_model}), (\ref{ODE_model_coefficients}) has
been obtained for the general case of time-varying rope length
$l(t)$, however, in this paper we only consider the case of
stationary ropes $l=cte$, which is directly deduced from
(\ref{ODE_model}), (\ref{ODE_model_coefficients}), by setting
$\dot{l}=\ddot{l}=0,\;\forall t$. The case where the car is static
and the control system has to reject the ropes' oscillations is of
interest in practical setting. Indeed, besides the case of
commercial buildings at night (where the elevators are not in
use), in many situations where the building is swaying due to
external strong weather conditions, the elevators are stopped, for
the security of the passengers. The control system is then used to
damp out the ropes sway, to avoid the ropes from damaging the
elevator system, and be still functional after the external
disturbances have passed. \footnote{ Of course, there are also
practical cases were the cars are in motion and an external
disturbance occurs. These cases correspond to a time-varying rope
length, which we have also studied, however, due to space
limitation, we could not include all the results in this paper.
The case of time-varying rope length will be presented in another
report.}
\end{remark}

\section{Main result: Lyapunov-based semi-active damper control}\label{section2}
The first controller deals with the case where the building,
hosting a stationary elevator (stopped at a given floor), sustains
a brief (impulse-like) external disturbance. For example, an
earthquake impulse with a sufficient force to make the top of the
building oscillate, or a strong wind gust that happens over a
short period of time, exciting the building structure and implying
residual vibrations of the building even after the wind gust
interruption. In these cases, the elevator ropes will vibrate,
starting from a non-zero initial conditions, due to the
impulse-like external disturbances (i.e., happening over a short
time interval), and this case correspond to the model
(\ref{ODE_model}), (\ref{ODE_model_coefficients}) with non-zero
initial conditions and zero external disturbances. We can now
state the following
theorem.\\\\
 \begin{thm}
Consider the rope dynamics (\ref{ODE_model}),
(\ref{ODE_model_coefficients}), with non-zero initial conditions,
with no external disturbances, i.e., $f_{1}(t)=f_{2}(t)=0,\forall
t$, then the feedback control
\begin{equation}\label{control_1_t}
U(z)=u_{max}\frac{\dot{q}^{T}\tilde{C}\dot{q}
}{\sqrt{1+(\dot{q}^{T}\tilde{C}\dot{q})^{2}}}
\end{equation}
where $z=(q^{T},\dot{q}^{T})^{T}$, implies that $q(t)\rightarrow
0,\;for\; t\rightarrow 0$, furthermore $|U|\leq u_{max},\;\forall
t$, and $|U|$ deceases with the decrease of $\dot{q}^{T}\tilde{C} \dot{q}$.\\\\
\end{thm}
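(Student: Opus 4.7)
The plan is to recognize that, under the stated hypotheses, the actuated dynamics collapses to a passively damped mechanical oscillator driven only by the semi-active control, and to close the loop with the mechanical energy as a Lyapunov function. First I would specialize the coefficients in (\ref{ODE_model_coefficients}) to the case at hand: setting $\dot l=\ddot l=0$ kills every term proportional to $\dot l$, in particular the whole matrix $\tilde K$ and the $\dot l$-dependent entries of $C$ and $K$, leaving $C=c_p I$ and a symmetric, time-independent stiffness $K$ built from the integrals $\int_0^1(1-\xi)\psi_i'\psi_j'\,d\xi$ and $\int_0^1\psi_i'\psi_j'\,d\xi$ multiplied by the positive factors $\rho l^{-1}g$, $m_e l^{-2}g$, and $0.5 M_{cs}g l^{-2}$, which is positive definite for a standard orthonormal basis such as $\psi_j(\xi)=\sqrt{2}\sin(j\pi\xi)$. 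Similarly, $f_1\equiv f_2\equiv 0$ yields $s_1=s_2=s_3=s_4\equiv 0$ and $\dot f_1\equiv\dot f_2\equiv 0$, so both forcing terms $F(t)$ and $\tilde F(t)$ vanish identically. The closed-loop dynamics reduce to
\[
M\ddot q+\bigl(c_p I+\tilde C\,U\bigr)\dot q+Kq=0,
\]
with $\tilde C=l^{-1}\psi(\xi^\ast)\psi(\xi^\ast)^T$ rank-one and positive semidefinite, where $\xi^\ast=(l-l_{dp})/l$.

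Next I would take the mechanical energy
\[
V(q,\dot q)=\tfrac12\,\dot q^T M\dot q+\tfrac12\,q^T K q,
\]
which is positive definite and radially unbounded in $z=(q^T,\dot q^T)^T$. Differentiating along trajectories and using $M\ddot q=-(c_p I+\tilde C U)\dot q-Kq$ gives
\[
\dot V=-c_p|\dot q|^2-U\,\dot q^T\tilde C\dot q.
\]
Substituting the feedback (\ref{control_1_t}) and noting that $\dot q^T\tilde C\dot q\ge 0$, one obtains
\[
\dot V=-c_p|\dot q|^2-u_{max}\,\frac{(\dot q^T\tilde C\dot q)^2}{\sqrt{1+(\dot q^T\tilde C\dot q)^2}}\le 0,
\]
so trajectories are bounded and $V$ is non-increasing. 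Convergence $q(t)\to 0$ as $t\to\infty$ then follows by LaSalle's invariance principle: on $\{\dot V=0\}$ the first term alone forces $\dot q\equiv 0$, hence $\ddot q\equiv 0$, and the dynamics reduce to $Kq=0$, giving $q=0$ since $K$ is nonsingular. Finally, the two side claims come from the scalar map $g(x)=x/\sqrt{1+x^2}$: since $|g|<1$ on $\mathbb{R}$ one has $|U|\le u_{max}$ automatically, and since $g$ is strictly increasing with $g(0)=0$, $|U|$ is strictly increasing in the nonnegative scalar $\dot q^T\tilde C\dot q$, which is the stated monotonicity.

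The delicate point I anticipate is the LaSalle step, because $\tilde C$ is rank one and the control-induced dissipation $U\,\dot q^T\tilde C\dot q$ is blind to velocity directions orthogonal to $\psi(\xi^\ast)$. What rescues the argument is the structurally present passive damping $c_p I$ of the rope, which dominates the sign of $\dot V$ and forces $\dot q\to 0$ on the invariant set. Dropping $c_p>0$ would instead require an observability-type argument on the pair $(\tilde C,K)$, which would not go through for a single-point damper, so the assumption that the rope has some positive intrinsic damping is essential to the proof.
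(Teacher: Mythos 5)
Your proof is correct and follows the same overall Lyapunov/LaSalle strategy as the paper (same energy function $V=\tfrac12\dot q^{T}M\dot q+\tfrac12 q^{T}Kq$, same derivative computation $\dot V=-\dot q^{T}C\dot q-U\,\dot q^{T}\tilde C\dot q$, same bounds on $|U|$ from $x/\sqrt{1+x^{2}}$), but you handle the decisive LaSalle step differently, and arguably more soundly. The paper drops the passive term $-\dot q^{T}C\dot q$ after (\ref{lyapunov_dot_2}) and concludes $\dot q\to 0$ from the control term alone by asserting that $\tilde C$ is symmetric positive definite; as you correctly observe, $\tilde C_{ij}=l^{-1}\psi_i(\xi^{\ast})\psi_j(\xi^{\ast})$ is a rank-one outer product, hence only positive semidefinite for $N\geq 2$, so $\dot q^{T}\tilde C\dot q=0$ by itself only pins down the component of $\dot q$ along $\psi(\xi^{\ast})$. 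You instead keep the $-c_p|\dot q|^{2}$ term and let the intrinsic damping force $\dot q\equiv 0$ on the invariant set, then get $q=0$ from $Kq=0$ with $K$ nonsingular — a cleaner route that also avoids the paper's somewhat redundant Barbalat argument layered on top of LaSalle (the paper further invokes invertibility of ``$K+\beta U$'', a stiffness perturbation that in fact vanishes here since $\tilde K\propto\dot l=0$). Two caveats: your fix genuinely requires $c_p>0$, whereas the paper's numerical test deliberately sets $c_p=0$; in that case one does need the observability-type argument you mention for the pair $(\psi(\xi^{\ast})^{T},M^{-1}K)$, and contrary to your closing remark it typically \emph{does} go through for a single interior damper provided $\psi_j(\xi^{\ast})\neq 0$ for all retained modes (i.e., the attachment point is not a node) and the modal frequencies are distinct — it fails only in those degenerate placements. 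Also, the asserted positive definiteness of $K$ (needed for $V$ to be a valid radially unbounded Lyapunov function) deserves the one line you give it; the paper takes it for granted.
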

\begin{proof}
 We define the control Lyapunov function  as
\begin{equation}\label{lyapunov1-t}
V(z)=\frac{1}{2}\dot{q}^{T}(t)M\dot{q}(t)+\frac{1}{2}q^{T}(t)K(t)q(t)
\end{equation}
where $z=(q^{T},\dot{q}^{T})^{T}$.\\
 First we compute the derivative of the Lyapunov function
along the dynamics  (\ref{ODE_model}), without disturbances, i.e.,
$F(t)=0,\;\tilde{F}(t)=0 \forall t$
\begin{equation}\label{lyapunov_dot_2}
\begin{array}{l}
\dot{V}(z)=\dot{q}^{T}(-C\dot{q}-\tilde{C}U\dot{q}-Kq)+q^{T}K\dot{q}\\
\hspace{0cm}=-\dot{q}^{T}C\dot{q}-\dot{q}^{T}\tilde{C}\dot{q} U
\end{array}
\end{equation}
Next, using $U$ defined in (\ref{control_1_t}), we have
\begin{equation}\label{lyapunov_dot_4}
\begin{array}{l}
\dot{V}(z)\leq -u_{max}\frac{(\dot{q}^{T}\tilde{C}\dot{q})^{2}
}{\sqrt{1+(\dot{q}^{T}\tilde{C}\dot{q})^{2}}}\\
\end{array}
\end{equation}
Using LaSalle theorem, e.g. \cite{K96}, and the fact that
$\tilde{C}$ is symmetric positive definite we can conclude that
the states of the closed-loop dynamics converge to the set
$S=\{z=(q^{T},\dot{q}^{T})^{T}\in\mathbb{R}^{2N},\;s.t.\;\dot{q}=0\;\}$.
Next, we analyze the closed-loop dynamics: Since the boundedness
of $V$ implies boundedness of $\dot{q},\;q$ and by equation
(\ref{ODE_model}), boundedness of $\ddot{q}$. Boundedness of
$\dot{q},\;\ddot{q}$ implies the uniform continuity of
$q,\;\dot{q}$, which again by (\ref{ODE_model}), implies the
uniform continuity of $\ddot{q}$. Next, since $\dot{q}\rightarrow
0$, and using Barbalat's Lemma, e.g. \cite{K96}, we conclude that
$\ddot{q}\rightarrow 0$, and by invertibility of the stiffness
matrix $K+\beta U$ we conclude that $q\rightarrow 0$. Finally, the
fact that $V$ is a radially unbounded function, ensures that the
equilibrium point $(q,\dot q)=(0,0)$ is globally asymptotically
stable. Furthermore the fact that $|U|\leq u_{max}$, and the
decrease of $|U|$ as function of $\dot{q}^{T}\tilde{C} \dot{q}$ is
deduced from equation (\ref{control_1_t}), since
$\frac{\dot{q}^{T}\tilde{C}\dot{q}
}{\sqrt{1+(\dot{q}^{T}\tilde{C}\dot{q})^{2}}}\leq 1$ and
$\frac{1}{\sqrt{1+(\dot{q}^{T}\tilde{C}\dot{q})^{2}}}\leq 1$.
\end{proof}
\begin{remark}
It is clear from equation (\ref{lyapunov_dot_2}) that the trivial
choice of a constant positive damping control $U$, will also imply
a convergence of the sway dynamics to zero. However, the
controller (\ref{control_1_t}) has the advantage to  require less
control energy comparatively to a constant damping, since by
construction of the control law (\ref{control_1_t}), the
stabilizing damping force decreases together with the decrease of
the sway.
\end{remark}

The controller $U$ given by (\ref{control_1_t}) does not take into
account the disturbance $F(t)$ explicitly. Next, we present a
controller which deals with the case of a static elevator in a
building under sustained external disturbances, e.g. sustained
wind forces on commercial buildings at night where the cars are
static. In this case $F(t)\neq 0,\;\tilde{F}(t)\neq 0$ over a
non-zero time interval, and satisfy the following
assumption.\\
\begin{assumption}\label{assumption0}
The time varying disturbance functions $f_{1},\;f_{2}$ are such
that, the functions $F(t),\;\tilde{F}(t)$ are bounded, i.e.
$\exists (F_{max},\;\tilde{F}_{max}),\;s.t.\;|F(t)|\leq
F_{max},\;|\tilde{F}(t)|\leq \tilde{F}_{max},\;\forall t$.
\end{assumption}
\begin{thm}\label{theorem3}
Consider the rope dynamics (\ref{ODE_model}),
(\ref{ODE_model_coefficients}), under non-zero external
disturbances, i.e., $f_{1}(t)\neq 0,\;f_{2}(t)\neq 0$ satisfying
Assumption \ref{assumption0}, with the feedback control
\begin{equation}\label{control_2_t}
\begin{array}{l}
{\scriptstyle U(z)=u_{max_p}\frac{\dot{q}^{T}\tilde{C}\dot{q}
}{\sqrt{1+(\dot{q}^{T}\tilde{C}\dot{q})^{2}}}}\\{\scriptstyle
+\dot{q}^{T}\tilde{C}\dot{q}}{\scriptstyle\frac{v_{1max}(\tilde{F}_{max}|\dot{q}|v_{1max}+\tilde{F}_{max}|\dot{q}|u_{max_p})}{\sqrt{1+(\dot{q}^{T}\tilde{C}\dot{q})^{2}(\tilde{F}_{max}|\dot{q}|v_{1max}+\tilde{F}_{max}|\dot{q}|u_{max_p})^{2}}}}\\
{\scriptstyle+\dot{q}^{T}\tilde{C}\dot{q}\frac{v_{2max}(|\dot{q}|F_{max}+v_{2max}\tilde{F}_{max}|\dot{q}|)}{\sqrt{1+(\dot{q}^{T}\tilde{C}\dot{q})^{2}(|\dot{q}|F_{max}+v_{2max}\tilde{F}_{max}|\dot{q}|)^{2}}}}
\end{array}
\end{equation}
where $u_{max_p},\;v_{1max},\;v_{2max}>0$, are chosen s.t.
$(u_{max_p}+v_{1max}+v_{2max})\leq u_{max}$ and
$z=(q^{T},\dot{q}^{T})^{T}$. Then, if we define the two invariant
sets: $$\begin{array}{l}
S_{1}=\{(q^{T},\dot{q}^{T})^{T}\in\mathbb{R}^{2N},\;s.t.\;\\\frac{(\dot{q}^{T}\tilde{C}\dot{q})^{2}}{\sqrt{1+(\dot{q}^{T}\tilde{C}\dot{q})^{2}(\tilde{F}_{max}|\dot{q}|v_{1max}+\tilde{F}_{max}|\dot{q}|u_{max_p})^{2}}}\\\leq\frac{1}{v_{1max}}\}\end{array}$$,
and
$$\begin{array}{l}S_{2}=\{(q^{T},\dot{q}^{T})^{T}\in\mathbb{R}^{2N},\;s.t.\;\\\frac{(\dot{q}^{T}\tilde{C}\dot{q})^{2}}{\sqrt{1+(\dot{q}^{T}\tilde{C}\dot{q})^{2}(|\dot{q}|F_{max}+v_{2max}\tilde{F}_{max}|\dot{q}|)^{2}}}\leq\frac{1}{v_{2max}}\}\end{array}$$,
the controller (\ref{control_2_t}) ensures that the state vector
$z$ converges to the the invariant set $S_{1}$ or $S_{2}$, and
that $|U|\leq u_{max},\;\forall t$.
\end{thm}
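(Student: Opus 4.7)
The plan is to reuse the Lyapunov function $V(z)=\tfrac{1}{2}\dot q^{T}M\dot q+\tfrac{1}{2}q^{T}Kq$ from the proof of Theorem~1 (the stationary-rope hypothesis makes $K$ time-invariant), but now to track the forcing terms $F(t)$ and $\tilde F(t)U$ that were absent previously. Differentiating along the forced dynamics (\ref{ODE_model}) yields
\[
\dot V = -\dot q^{T}C\dot q - (\dot q^{T}\tilde C\dot q)\,U + \dot q^{T}F(t) + (\dot q^{T}\tilde F(t))\,U.
\]
Writing $x:=\dot q^{T}\tilde C\dot q$ and splitting the control as $U=U_{1}+U_{2}+U_{3}$ in the order the three summands appear in (\ref{control_2_t}), I view $U_{1}$ as the nominal stabiliser reused from Theorem~1, $U_{2}$ as engineered to dominate the multiplicative cross-term $\dot q^{T}\tilde F(U_{1}+U_{2})$ produced by $\tilde F(t)U$, and $U_{3}$ as engineered to dominate the remaining $\dot q^{T}F + \dot q^{T}\tilde F\,U_{3}$.

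The amplitude bound is easy to dispose of first: each summand has the Nussbaum-type form $\tfrac{v\,\xi}{\sqrt{1+\xi^{2}}}$ with magnitude at most $v$, so $|U|\le u_{max_{p}}+v_{1max}+v_{2max}\le u_{max}$. For the stability part I would regroup
\[
\dot V \le -\dot q^{T}C\dot q - xU_{1} + \bigl[-xU_{2} + \dot q^{T}\tilde F(U_{1}+U_{2})\bigr] + \bigl[-xU_{3} + \dot q^{T}F + \dot q^{T}\tilde F\,U_{3}\bigr],
\]
and use Assumption~\ref{assumption0} together with the amplitude bounds $|U_{1}|\le u_{max_{p}}$, $|U_{2}|\le v_{1max}$, $|U_{3}|\le v_{2max}$ to majorise the two disturbance groups by $B_{1}:=\tilde F_{max}|\dot q|(u_{max_{p}}+v_{1max})$ and $B_{2}:=|\dot q|F_{max}+\tilde F_{max}|\dot q|v_{2max}$, respectively. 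Substituting the explicit forms of $U_{2}$ and $U_{3}$, whose cores are precisely $v_{1max}\,xB_{1}/\sqrt{1+x^{2}B_{1}^{2}}$ and $v_{2max}\,xB_{2}/\sqrt{1+x^{2}B_{2}^{2}}$, the brackets are bounded above by $-v_{1max}x^{2}B_{1}/\sqrt{1+x^{2}B_{1}^{2}}+B_{1}$ and $-v_{2max}x^{2}B_{2}/\sqrt{1+x^{2}B_{2}^{2}}+B_{2}$, each of which is non-positive exactly when $x^{2}/\sqrt{1+x^{2}B_{i}^{2}}\ge 1/v_{1max}$ or $1/v_{2max}$, i.e., precisely outside the corresponding set $S_{i}$ from the statement.

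Combining, $\dot V$ is non-positive (and indeed strictly negative off $\{\dot q=0\}$, which in turn lies inside $S_{1}\cap S_{2}$) on the complement of $S_{1}\cup S_{2}$, since the residual $-\dot q^{T}C\dot q - xU_{1}$ is handled exactly as in Theorem~1 using $C\succ 0$, $\tilde C\succeq 0$ and the sign alignment of $U_{1}$ with $x$. Hence trajectories are driven into $S_{1}\cup S_{2}$, and a Barbalat/LaSalle refinement analogous to that closing Theorem~1 sharpens the convergence inside. The main obstacle I anticipate is the algebraic bookkeeping: arranging the sign and magnitude majorisations so that each cross-coupling generated by the multiplicative disturbance $\tilde F(t)U$ is dominated term-by-term by the matching piece of the controller, and verifying that the residual regions where $\dot V$ can lose negativity are exactly the sets $S_{1}$ and $S_{2}$ written in the theorem. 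A secondary subtlety is the forward-invariance claim, which is not automatic from $\dot V\le 0$ on the complement and typically requires a level-set trapping argument using the radial unboundedness of $V$ together with the fact that $\{\dot q=0\}\subset S_{1}\cap S_{2}$.
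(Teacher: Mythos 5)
Your proposal follows essentially the same route as the paper's proof: the same Lyapunov function, the same split of the control into the nominal term of Theorem~1 plus two compensating terms majorised via Assumption~\ref{assumption0} (your $B_{1},B_{2}$ are the paper's $T_{1},T_{2}$), the same identification of $S_{1},S_{2}$ as the regions where the corresponding upper bounds on $\dot V$ lose negativity, and the same by-construction argument for $|U|\leq u_{max}$. The only differences are presentational (your one-shot regrouping versus the paper's two-step Lyapunov-reconstruction design of $v_{1},v_{2}$), and your closing caveats about the LaSalle refinement and forward invariance simply flag loose ends that the paper's own argument also leaves informal.
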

\begin{proof}
Let us consider again the Lyapunov function (\ref{lyapunov1-t}).
Its derivative along the dynamics (\ref{ODE_model}), with non-zero
disturbance, i.e. $F(t)\neq 0,\;\tilde{F}\neq 0$, writes as
\begin{equation}\label{lyapunov_dot_3}
\begin{array}{l}
\dot{V}(x)=\dot{q}^{T}(-C\dot{q}-\tilde{C}U\dot{q}-Kq+\tilde{F}U)\\+q^{T}K\dot{q}+\dot{q}^{T}F(t)\\
\hspace{0cm}=-\dot{q}^{T}C\dot{q}-\dot{q}^{T}\tilde{C}U\dot{q}+\dot{q}^{T}\tilde{F}U
+\dot{q}^{T}F(t)
\\
\leq -\dot{q}^{T}\tilde{C}U\dot{q}+\dot{q}^{T}\tilde{F}U
+\dot{q}^{T}F(t)
\end{array}
\end{equation}
Now, we use the concept of Lyapunov reconstruction, e.g.
\cite{BL09-4}, we write the control
\begin{equation}\label{total_control}
U=u_{nom_p}+v_{1}+v_{2}
\end{equation}
where, $u_{nom_p}$ is the nominal controller, given by
(\ref{control_1_t}) with $u_{max}=u_{max_p}$, designed for the
case where $F(t)=\tilde{F}(t)=0,\;\forall t$. The remaining terms
$v_{1},\;v_{2}$ are added to compensate for the effect of the
disturbances $\tilde{F}$ and $F$, respectively. We design $v_{1}$
and $v_{2}$ in two steps:\\
- First step: We assume that $\tilde{F}\neq 0,\;F(t)=0,\;\forall
t$, and design $v_{1}$ to compensate for the effect of
$\tilde{F}$. In this case $U=u_{nom_p}+v_{1}$.\\In this case, the
Lyapunov function derivative is bounded as as follows
$$
\begin{array}{l} \dot{V}(x)\leq
-\dot{q}^{T}\tilde{C}(u_{nom_p}+v_{1})\dot{q}+\dot{q}^{T}\tilde{F}(u_{nom_p}+v_{1})
\\
\leq
-\dot{q}^{T}\tilde{C}v_{1}\dot{q}+\dot{q}^{T}\tilde{F}(u_{nom_p}+v_{1})
\\\leq -\dot{q}^{T}\tilde{C}\dot{q}v_{1}+\dot{q}^{T}\tilde{F}u_{nom_p}+\dot{q}^{T}\tilde{F}v_{1}
\end{array}
$$
 which under Assumption \ref{assumption0}, gives
$$
\begin{array}{l}
\leq
-\dot{q}^{T}\tilde{C}\dot{q}v_{1}+|\dot{q}|\tilde{F}_{max}u_{max_p}+|\dot{q}|\tilde{F}_{max}v_{1max}
\end{array}
$$
Now, if we define (to simplify the notations), the term
$$T_{1}=+|\dot{q}|\tilde{F}_{max}u_{max_p}\\+|\dot{q}|\tilde{F}_{max}v_{1max}$$
and if we choose the controller
$$v_{1}=\frac{v_{1max}T_{1}\dot{q}^{T}\tilde{C}\dot{q}}{\sqrt{1+T_{1}^{2}(\dot{q}^{T}\tilde{C}\dot{q})^{2}}}$$
This leads to the following Lyapunov function derivative upper
bound
$$
\begin{array}{l}
\leq
T_{1}\big(1-\frac{v_{1max}(\dot{q}^{T}\tilde{C}\dot{q})^{2}}{\sqrt{1+T_{1}^{2}(\dot{q}^{T}\tilde{C}\dot{q})^{2}}}\big)=B_{1}
\end{array}
$$
- Second step: We assume that $\tilde{F}(t)\neq 0, \;F(t)\neq 0$,
and design $v_{2}$ to compensate for $F$. In this case we write
the total control as $U=u_{nom_p}+v_{1}+v_{2}$. Now the
upper-bound of the Lyapunov function derivative (along the total
control) writes as
$$
\begin{array}{l}
\leq
B_{1}-\dot{q}^{T}\tilde{C}\dot{q}v_{2}+\dot{q}^{T}\tilde{F}v_{2}+\dot{q}^{T}F\\
\leq
B_{1}-\dot{q}^{T}\tilde{C}\dot{q}v_{2}+|\dot{q}|\tilde{F}_{max}v_{2max}+|\dot{q}|F_{max}
\end{array}
$$
Next, if we define the term
$$T_{2}=|\dot{q}|\tilde{F}_{max}v_{2max}+|\dot{q}|F_{max}$$
and if we choose the controller
$$v_{2}=\frac{v_{2max}T_{2}\dot{q}^{T}\tilde{C}\dot{q}}{\sqrt{1+T_{2}^{2}(\dot{q}^{T}\tilde{C}\dot{q})^{2}}}$$
This leads to the following Lyapunov function derivative
upper-bound
$$
\begin{array}{l}
\leq
B_{1}+T_{2}\big(1-\frac{v_{2max}(\dot{q}^{T}\tilde{C}\dot{q})^{2}}{\sqrt{1+T_{2}^{2}(\dot{q}^{T}\tilde{C}\dot{q})^{2}}}\big)\\
\leq B_{1}+B_{2}
\end{array}
$$
where
$B_{2}=T_{2}\big(1-\frac{v_{2max}(\dot{q}^{T}\tilde{C}\dot{q})^{2}}{\sqrt{1+T_{2}^{2}(\dot{q}^{T}\tilde{C}\dot{q})^{2}}}\big)$.
\\
By choosing $v_{1max},\;v_{2max}$ high enough the two terms
$B_{1},\;B_{2}$ will be made negative. We can then analyze two
cases:\\
1- First, the trajectories keep decreasing until they reach the
invariant set
$$S_{1}=\{(q^{T},\dot{q}^{T})^{T}\in\mathbb{R}^{2N},\;s.t.\;B_{1}\geq
0\}$$ Then, the trajectories can either keep decreasing (if
$|B_{2}|>B_{1}$) until they enter the invariant set
$$S_{2}=\{(q^{T},\dot{q}^{T})^{T}\in\mathbb{R}^{2N},\;s.t.\;B_{2}\geq
0\}$$or they get stuck at $S_{1}$.\\
2- Second, the trajectories decrease until they reach the
invariant set $S_{2}$ first, and stay there, or keep decreasing
until they reach the invariant set $S_{1}$.\\In both cases, the
trajectories will end up in either $S_{1}$ or $S_{2}$. Finally,
the fact that $|U|\leq u_{max}$ is obtained by construction of the
three terms, since from (\ref{control_1_t}), we can write
$|u_{nom_p}|\leq u_{max_p}$ and by construction $|v_{1}|\leq
v_{1max}$, $|v_{2}|\leq v_{2max}$, which leads to $|U|\leq
u_{max_p}+v_{1max}+v_{2max}\leq u_{max}$.
\end{proof}
\begin{remark}
We want to underline here the fact that, contrary to the previous
case of impulse disturbances, in this case of sustainable external
disturbances, the use of a simple passive damper, i.e. $U=cte$,
might actually destabilize the system. Indeed, by examining
equation (\ref{lyapunov_dot_3}), we can see that if the $U$ is
constant the positive term $+\dot{q}^{T}\tilde{F}U$, which is due
the external disturbance, could overtake the damping negative term
$-\dot{q}^{T}\tilde{C}U\dot{q}$, leading to instability of the
system.
\end{remark}
\begin{remark}
The controllers (\ref{control_1_t}), (\ref{control_2_t}) are state
feedbacks based on $q,\;\dot{q}$, these states can be easily
computed from the sway measurements at $N$ given positions
$y(1),...,y(N)$, via equation (\ref{discretization_1}). The sway
$w(y,t)$ can be measured by laser displacement sensors placed at
the positions $y(i),\;i=1,2,...N$, along the rope,
e.g.\cite{OYNFN02}, subsequently $q$ can be computed by simple
algebraic inversion of (\ref{discretization_1}), and $\dot{q}$ can
be obtained by direct numerical differentiation of $q$.
\end{remark}
\section{Numerical example}\label{section3}
In this section we present some numerical results obtained on the
example presented in \cite{K011}.
\begin{table}\hspace*{-0.5cm}
\begin{center}
\begin{tabular}{|l|l|l|}
\hline {\bf Parameters} & {\bf Definitions} & {\bf Values}\\\hline
$n$ & Number of ropes &$ 8[-]$\\
$m_{e}$ & Mass of the car & $3500 [kg]$\\
$\rho$ & Main rope linear mass density & $2.11[kg/m]$\\
$l$ & Rope maximum length & $390[m]$\\
$H$ & Building height & $402.8[m]$\\
$c_{p}$ & Damping coefficient & $0.0315 [N.sec/m]$\\
\hline
\end{tabular}\caption{Numerical values of the mechanical parameters}
\label{table1}
\end{center}
\end{table}
The case of an elevator system with the mechanical parameters
summarized on Table \ref{table1} has been considered for the tests
presented hereafter. We write the controllers based on the model
(\ref{ODE_model}), (\ref{ODE_model_coefficients}) with one mode,
but we test them on a model with two modes. The fact is that one
mode is enough since when comparing the solution of the PDE
(\ref{PDE3}) to the discrete model (\ref{ODE_model}) the higher
modes shown to be negligible, and a discrete model with one mode
showed a very good match with the PDE model, but to make the
simulation tests more realistic we chose to test the controllers
on a two modes model. Furthermore, to make the simulation tests
more challenging we added a random white noise to the states fed
back to the controller (equivalent to about $\pm1\;cm$ of error on
the rope sway measurement from which the states are computed, see
Remark 1), and we filtered the control signal with a first order
filter with a cut frequency of $10\;hz$ and a delay term of $5$
sampling times, to simulate actuator dynamics and delays due to
signal transmission and computation time. First, to validate
Theorem 1, we present the results obtained by applying the
controller (\ref{control_1_t}), to the model (\ref{ODE_model}),
(\ref{ODE_model_coefficients}), with non-zero initial conditions
$q(0)=20,\;\dot{q}(0)=5$, and zero external disturbances, i.e.
$f_{1}(t)=f_{2}(t)=0,\;\forall t$. In these first tests, to show
the effect of the controller (\ref{control_1_t}) alone, without
the `help' of the system's natural damping, we fix the damping
coefficient to zero, i.e. $c_{p}=0$. We apply the controller
(\ref{control_1_t}), with $u_{max}=10^9\;N sec/m$. Figures
\ref{figure1}, \ref{figure1_zoom}\footnote{The figures' zoom is
included for the reader to have a better idea about the signals
shape.} show the rope sway obtained at half rope-length $y=195\;m$
with and without control. Without control the rope sway reaches a
maximum value of about $1.5\;m$. With control we see clearly the
expected damping effect of the controller, which reduces the sway
amplitude by half. The corresponding control force is depicted on
Figures \ref{figure2}, \ref{figure2_zoom}. We see that, as
expected from the theoretical analysis of Theorem 1, the control
force remains bounded by a maximum value of $40 k N sec/m$, which
is easily realizable by existing semi-active damper, e.g.
magnotorheological damper. Furthermore, as proven in Theorem 1,
the control amplitude decreases with the decrease of the sway.

\begin{figure}
\centering
\includegraphics[width=1\linewidth]{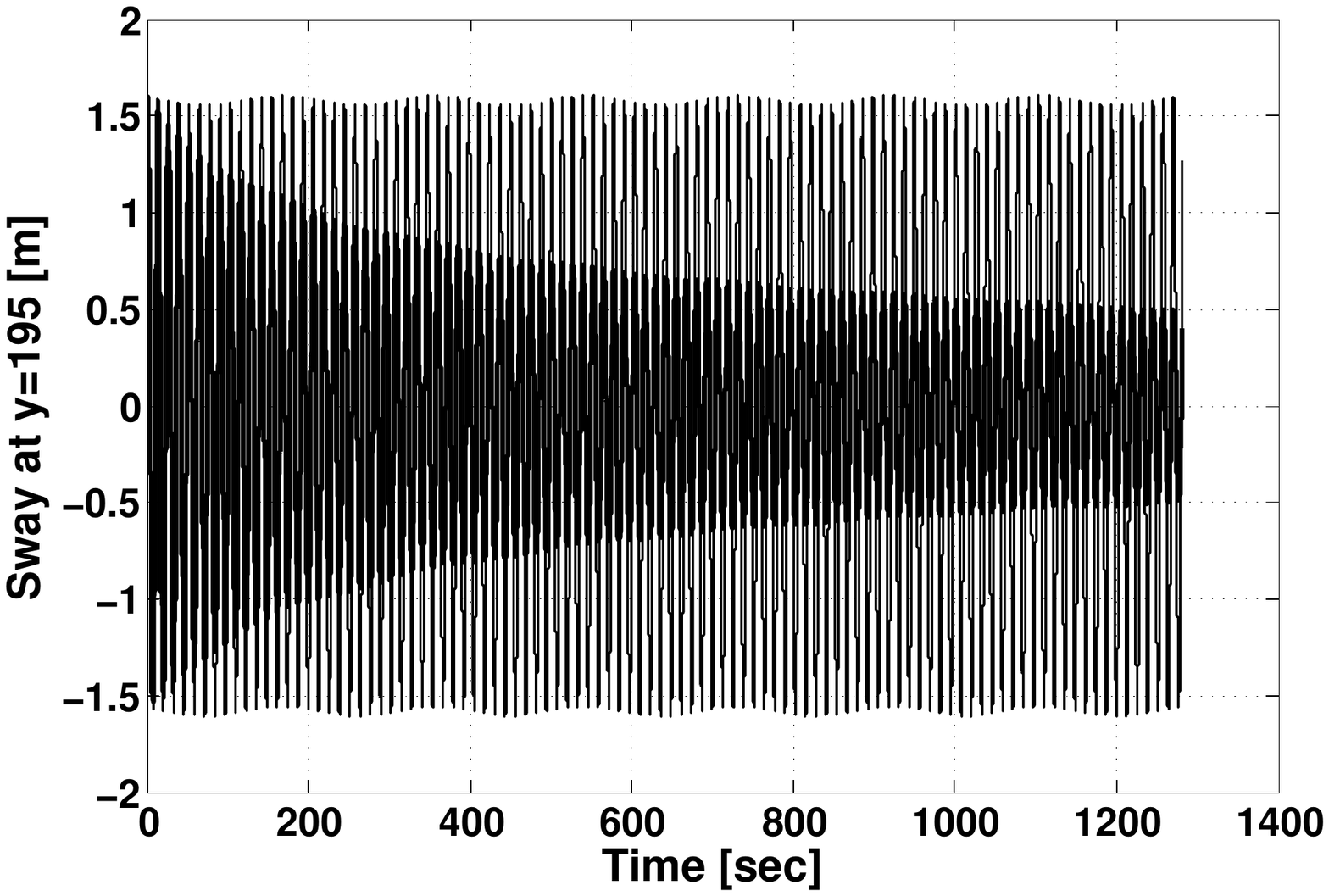}\vspace{-3cm}
\caption{Rope sway at $y=195\;m$: No control (thin line)- With
controller (\ref{control_1_t}) (bold line)}
\label{figure1}\vspace{-3cm}
\includegraphics[width=1\linewidth]{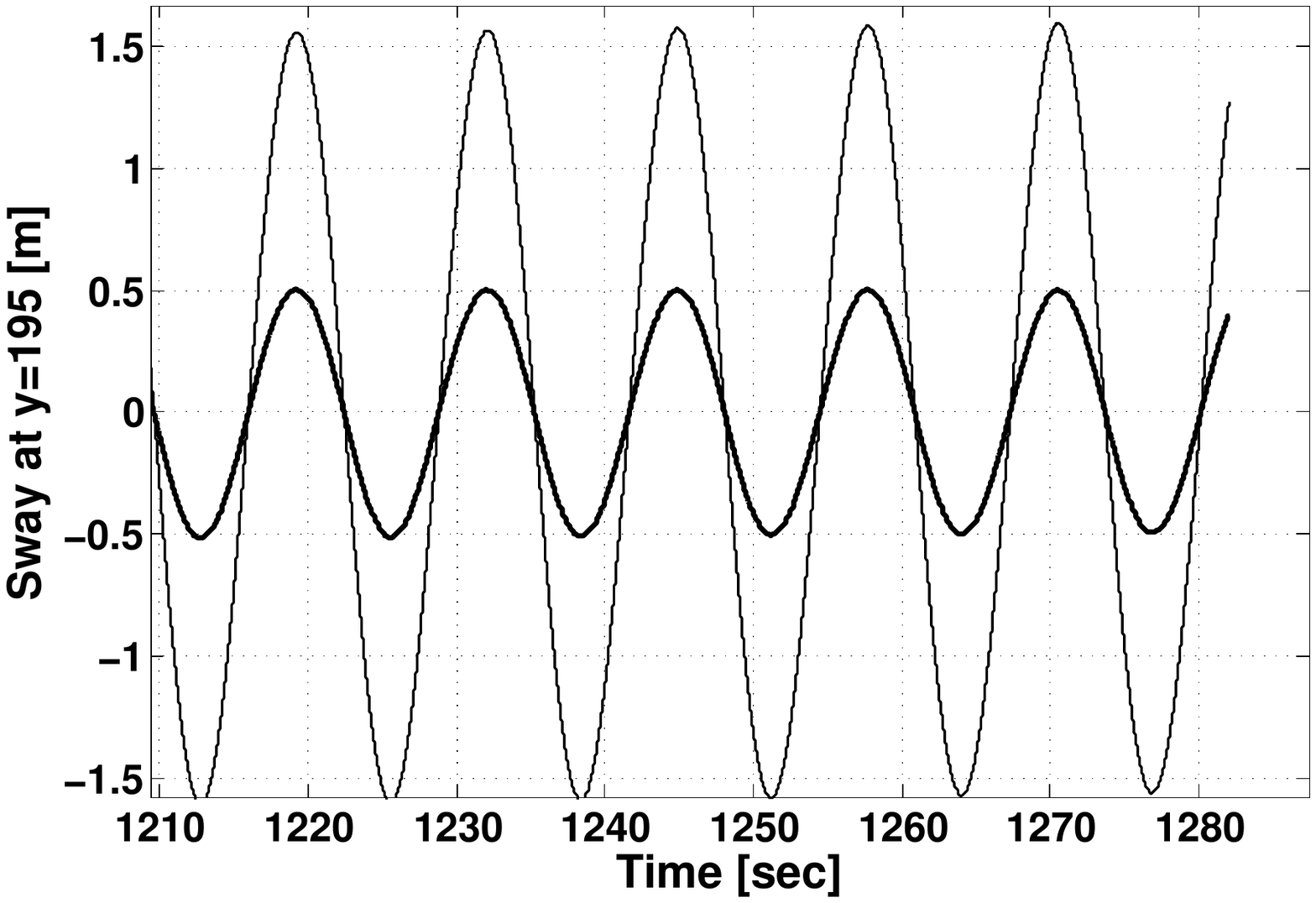}\vspace{-3cm}
\caption{Zoom of rope sway at $y=195\;m$: No control (thin line)-
With controller (\ref{control_1_t}) (bold line)}
\label{figure1_zoom}
\end{figure}
\begin{figure}
\centering
\includegraphics[width=1\linewidth]{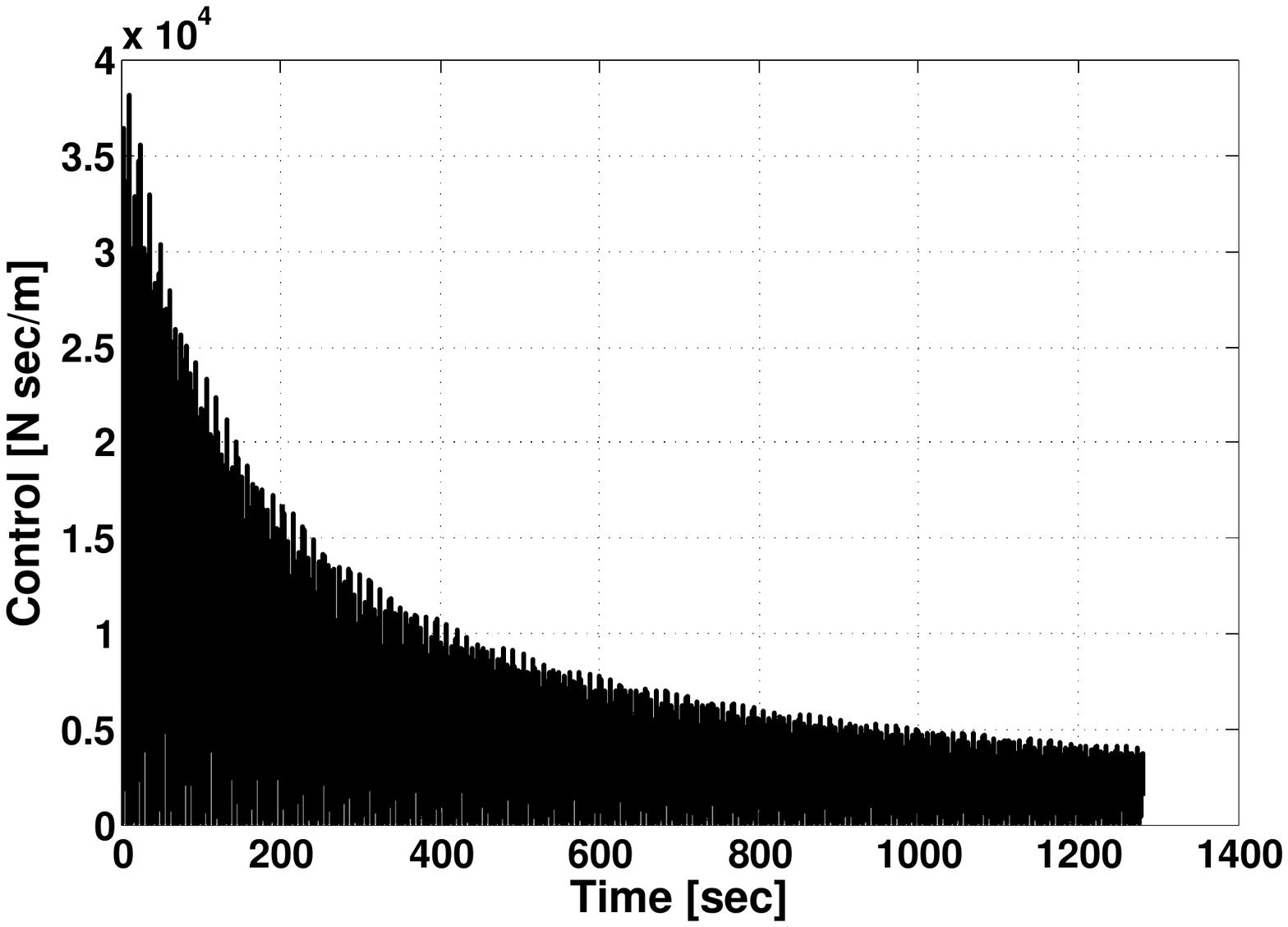}\vspace{-3cm}
\caption{Output of controller (\ref{control_1_t})} \label{figure2}
\vspace{-3cm}
\includegraphics[width=1\linewidth]{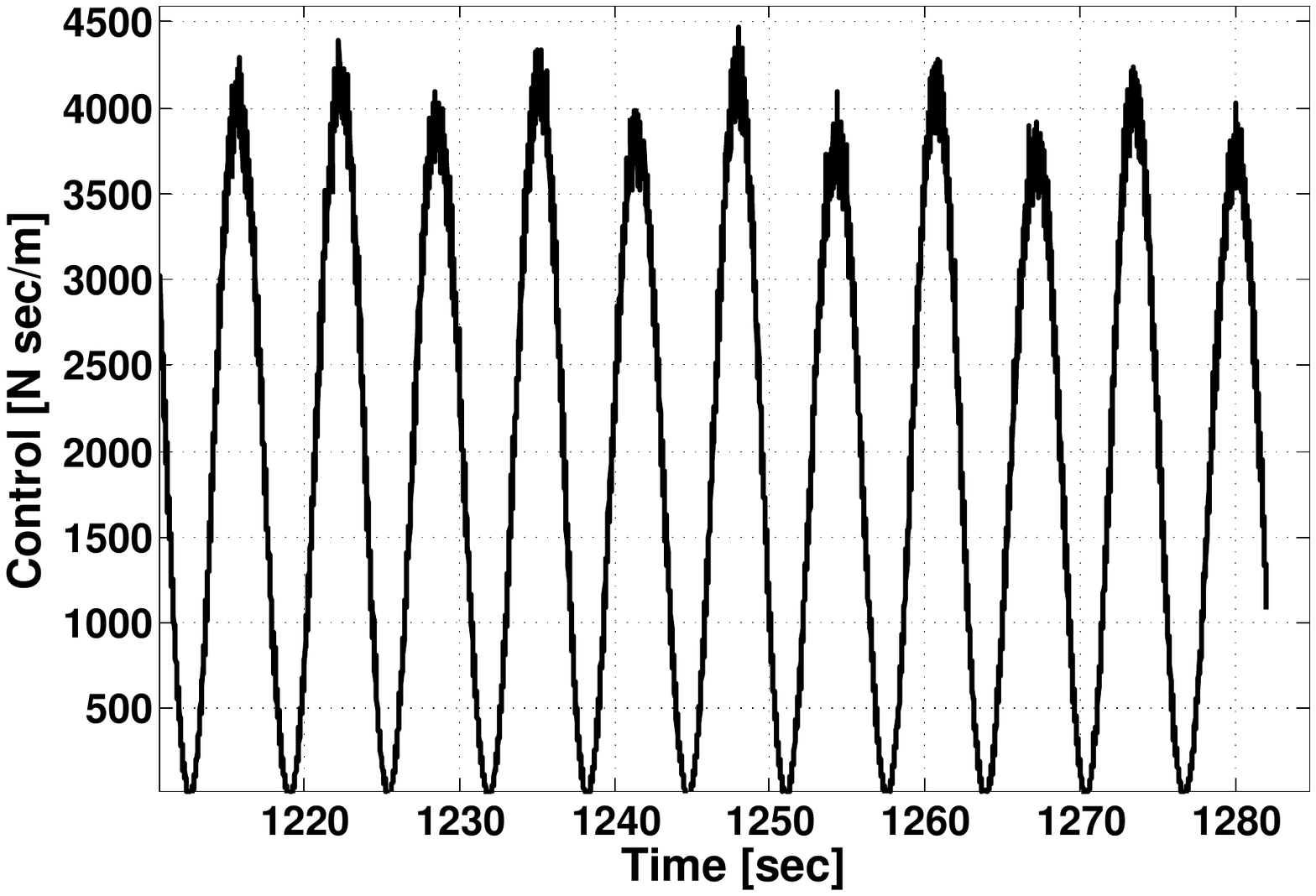}\vspace{-3cm}
\label{theorem3-sway-control-u-zoom} \caption{Output of controller
(\ref{control_1_t})- Zoom} \label{figure2_zoom}
\end{figure}

Next, we consider the model (\ref{ODE_model}),
(\ref{ODE_model_coefficients}) with no-zero disturbance signals:
$f_{1}(t)=0.2 sin(2\pi.0.08t)$, and $f_{2}$ being deduced from
$f_{1}$ via equation (\ref{BC_relation}). We underline that we
have purposely selected the disturbance frequency to be equal to
the first resonance frequency of the rope, to simulate the
`worst-case scenario'. In this case we apply the controller
(\ref{control_2_t}), with the parameters $u_{max_p}=10^9 \;N
sec/m,\;v_{1max}=v_{2max}=10^5\;N
sec/m,\;F_{max}=\tilde{F}_{max}=1$. We show on Figures
\ref{figure_1_with_dist}, \ref{figure_1_with_dist_zoom} the sway
signal in the uncontrolled and the controlled case. We see that
the sway steady state maximum amplitude is reduced from $8.4\;m$
in the uncontrolled case to $2.4\;m$ with control. The noisy
 (due to the simulated measurements noise) bounded and continuous control signals are reported on Figures
\ref{figure_2_with_dist}, \ref{figure_2_with_dist_zoom}.

\begin{figure}
\centering
\includegraphics[width=1\linewidth]{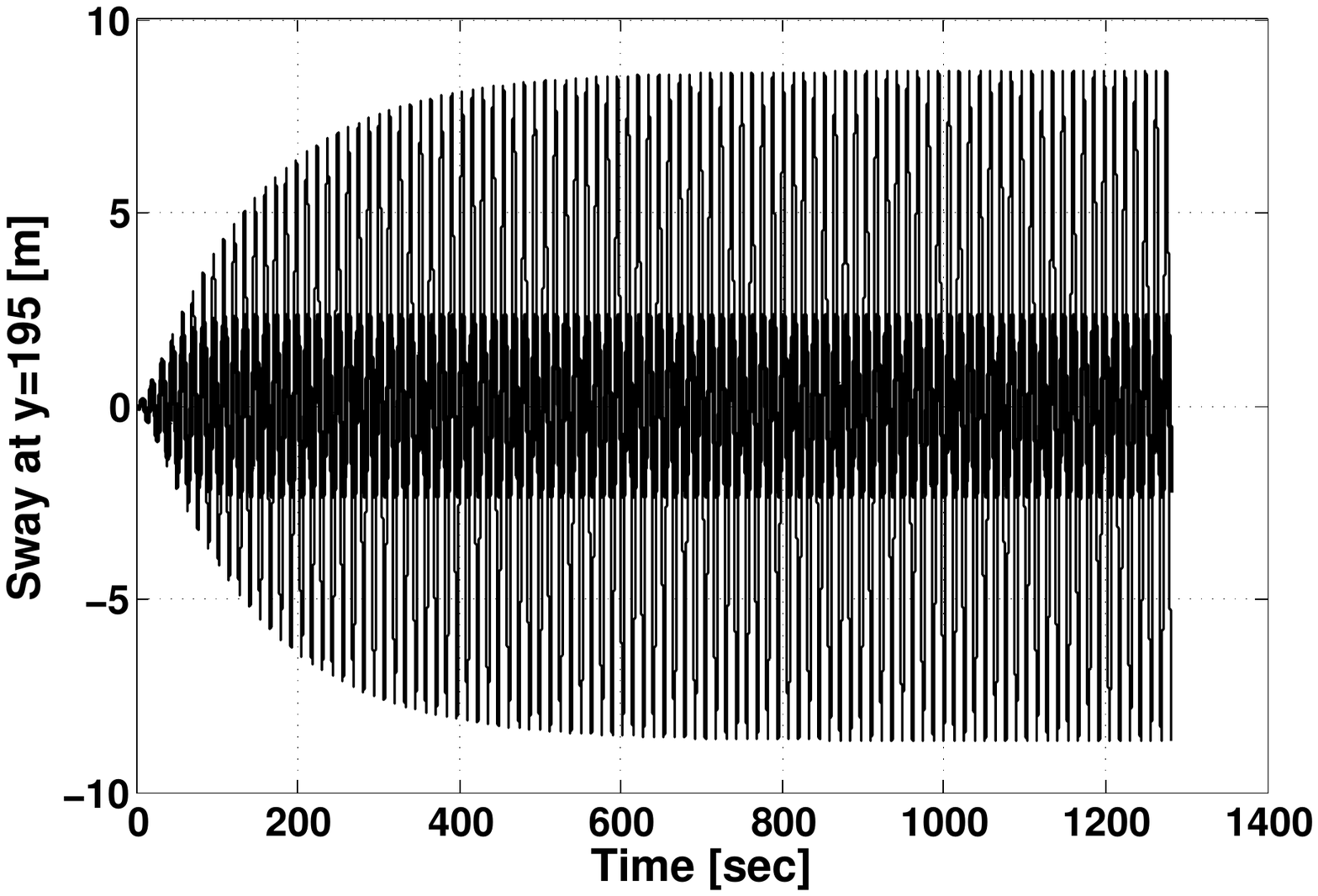}\vspace{-3cm}
\caption{Rope sway at $y=195\;m$: No control (thin line)- With
controller (\ref{control_2_t}) (bold line)}
\label{figure_1_with_dist} \vspace{-3cm}
\includegraphics[width=1\linewidth]{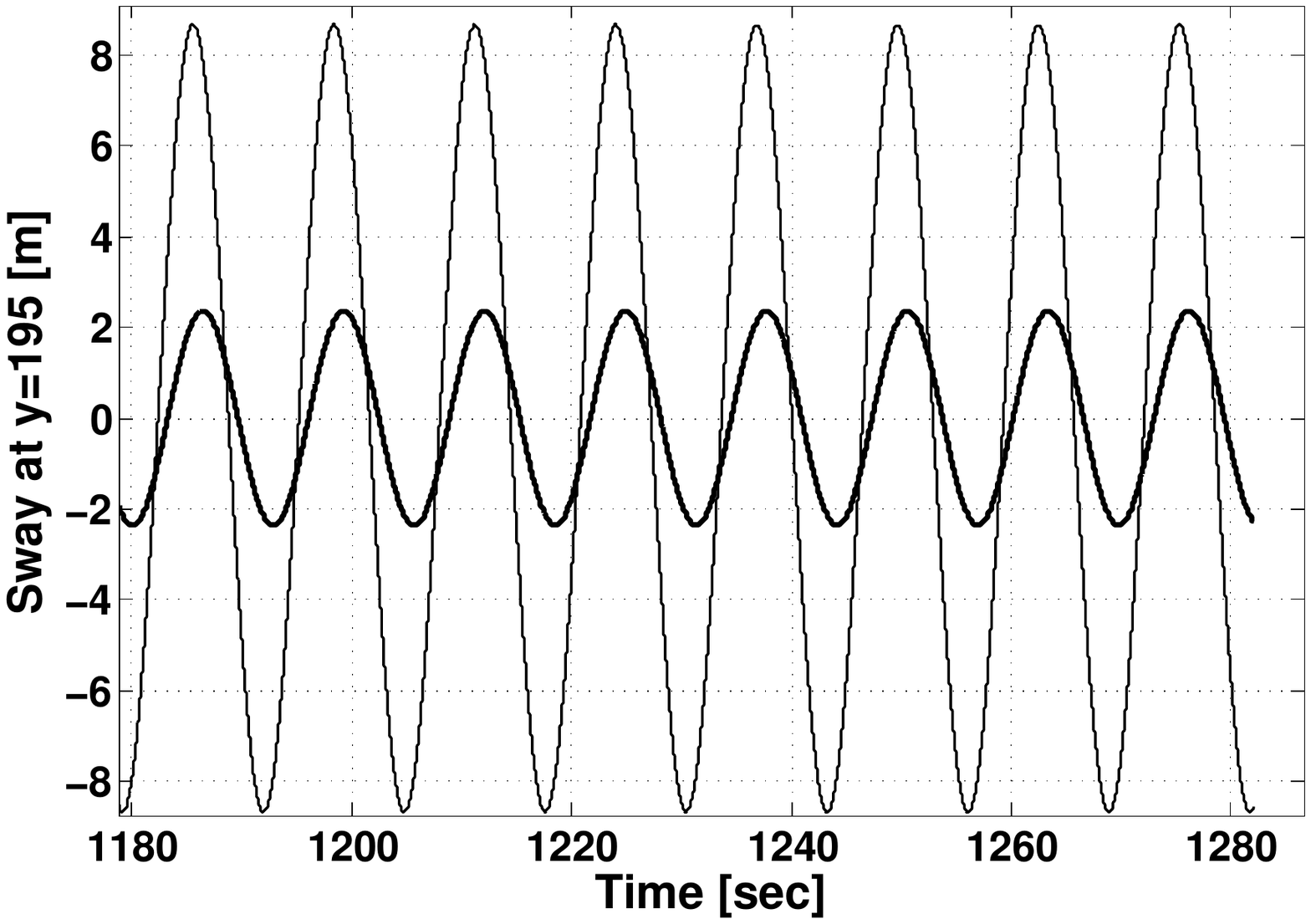}\vspace{-3cm}
\caption{Zoom of rope sway at $y=195\;m$: No control (thin line)-
With controller (\ref{control_2_t}) (bold line)}
\label{figure_1_with_dist_zoom}
\end{figure}
\begin{figure}
\centering
\includegraphics[width=1\linewidth]{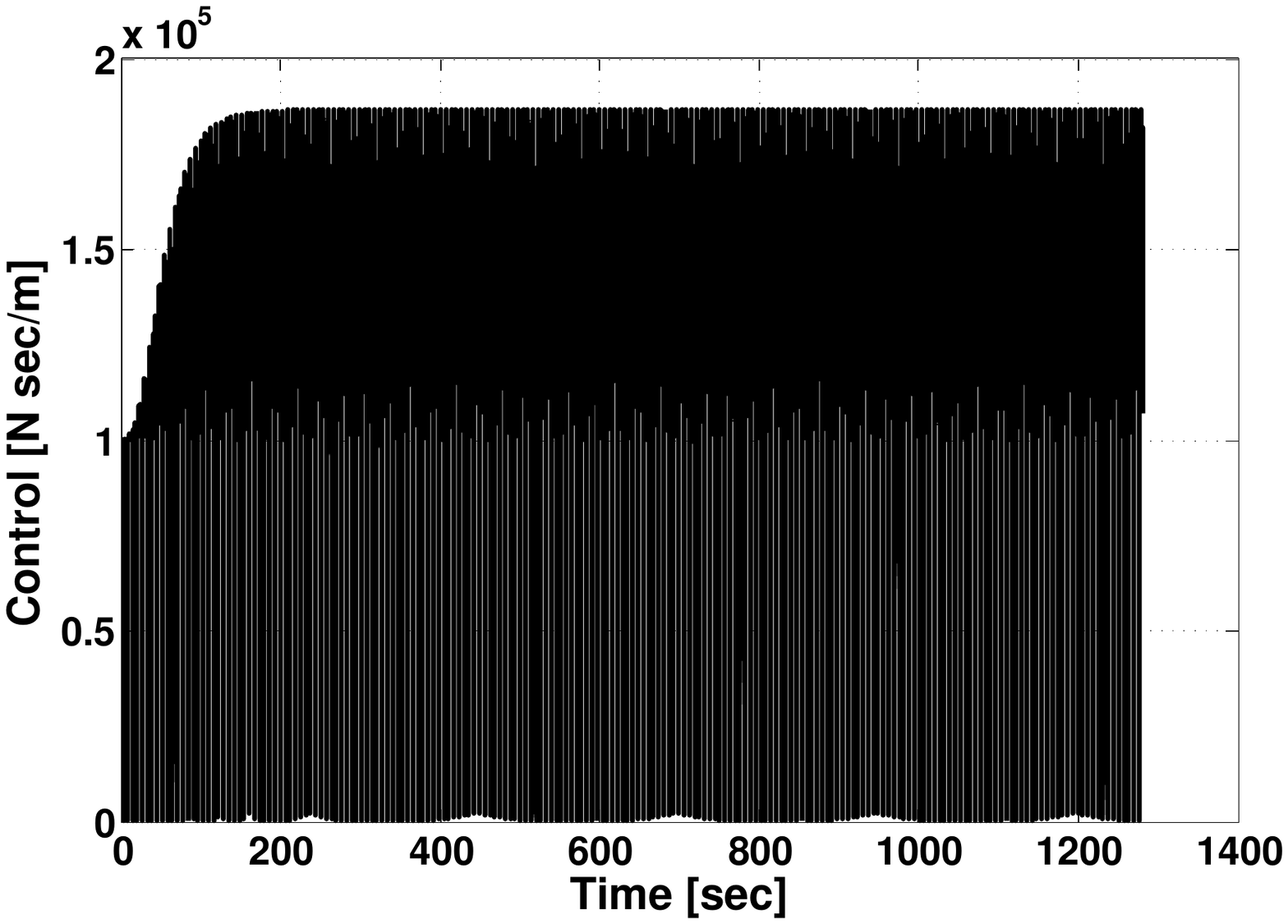}\vspace{-3cm}
\caption{Output of controller (\ref{control_2_t})}
\label{figure_2_with_dist} \vspace{-3cm}
\includegraphics[width=1\linewidth]{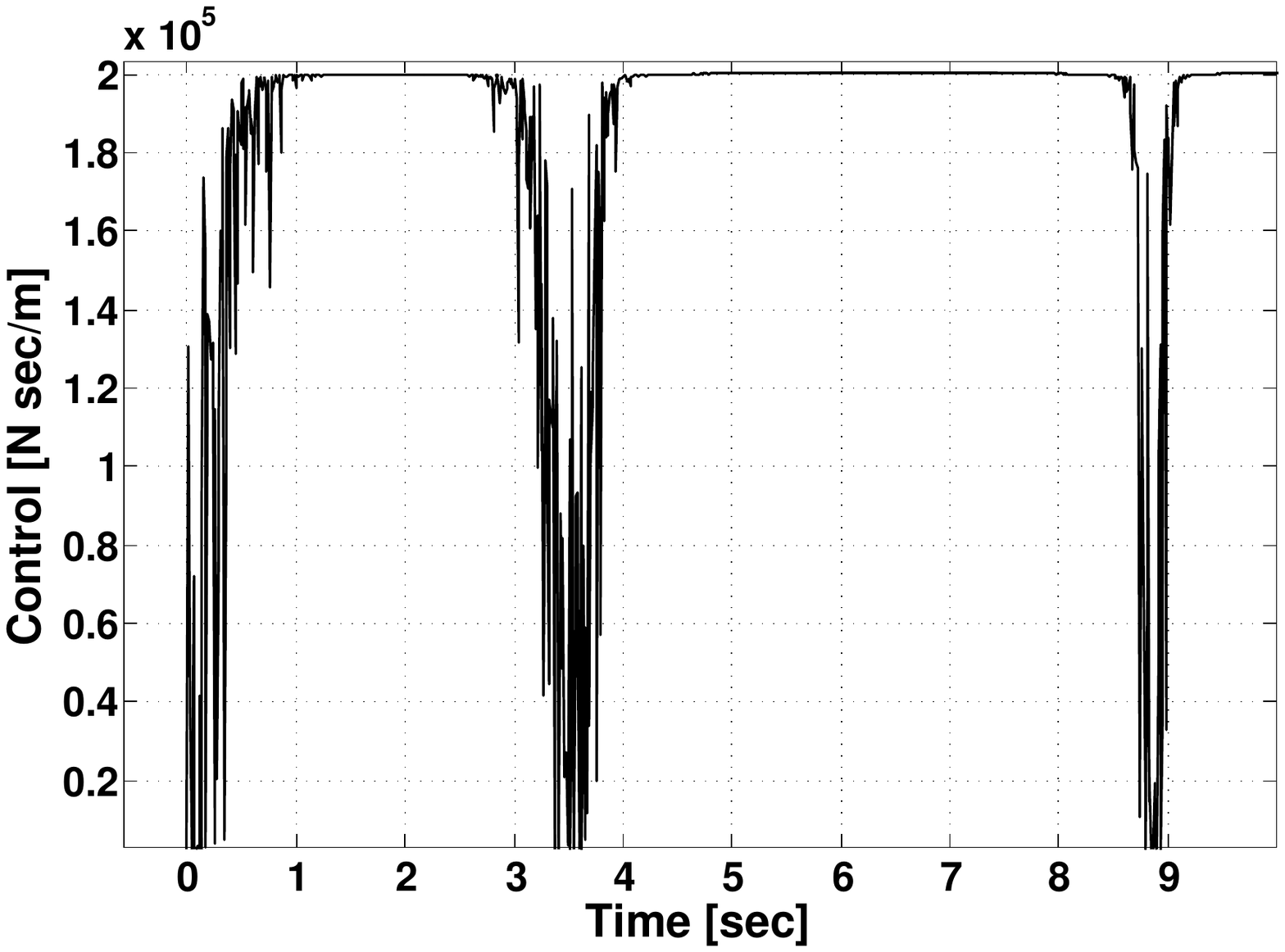}\vspace{-3cm}
 \caption{Output of controller (\ref{control_2_t})- Zoom}
 \label{figure_2_with_dist_zoom}
\end{figure}
\section{Conclusion}\label{section4}
In this paper we have studied the problem of semi active control
of elevator rope sway dynamics occurring due to external force
disturbances acting on the elevator system. We have considered the
case of a static car, i.e. constant rope length and have proposed
two nonlinear controllers based on Lyapunov theory. We have
presented the stability analysis of these controllers and shown
their efficiency on a numerical example. The semi-active
stabilization problems related to time-varying rope lengths, i.e.
moving car, will be presented in a future report.


\begin{thebibliography}{10}
\providecommand{\url}[1]{#1}
\def\UrlFont{\rmfamily}
\providecommand{\newblock}{\relax}
\providecommand{\bibinfo}[2]{#2}
\providecommand\BIBentrySTDinterwordspacing{\spaceskip=0pt\relax}
\providecommand\BIBentryALTinterwordstretchfactor{4}
\providecommand\BIBentryALTinterwordspacing{\spaceskip=\fontdimen2\font
plus \BIBentryALTinterwordstretchfactor\fontdimen3\font minus
  \fontdimen4\font\relax}
\providecommand\BIBforeignlanguage[2]{{%
\expandafter\ifx\csname l@#1\endcsname\relax
\typeout{** WARNING: IEEEtran.bst: No hyphenation pattern has been}%
\typeout{** loaded for the language `#1'. Using the pattern for}%
\typeout{** the default language instead.}%
\else \language=\csname l@#1\endcsname \fi #2}}

\bibitem{K011}
S.~Kaczmarczyk, ``Nonlinear sway and active stiffness control of
long moving
  ropes in high-rise vertical transporation systems,'' in \emph{The 10th
  International Conference on Vibration Problems}, S.~P. in~Physics, Ed., 2011,
  pp. 183--188.

\bibitem{KIT09}
S.~Kaczmarczyk, R.~Iwankiewicz, and Y.~Terumichi, ``The dynamic
behavior of a
  non-stationary elevator compensating rope system under harmonic and
  stochastic excitations,'' in \emph{Journal of Physics: Conference Series
  181}.\hskip 1em plus 0.5em minus 0.4em\relax IOP Publishing, 2009, pp. 1--8.

\bibitem{ZX03}
W.D.Zhu and G.~Xu, ``Vibration of elevator cables with small
bending
  stiffness,'' \emph{Journal of Sound and Vibration}, vol. 263, pp. 679--699,
  2003.

\bibitem{ZT03}
W.D.Zhu and L.~Teppo, ``Design and analysis of a scaled model of a
high-rise
  high-speed elevator,'' \emph{Journal of Sound and Vibration}, vol. 264, pp.
  707--731, 2003.

\bibitem{ZC06}
W.D.Zhu and Y.~Chen, ``Theoretical and experimental investigation
of elevator
  cable dynamics and control,'' \emph{Journal of Sound and Vibration}, vol.
  128, pp. 66--78, 2006.

\bibitem{FWP93}
Y.~Fujino, P.~Warnitchai, and B.~Pacheco, ``Active stiffness
control of cable
  vibration,'' \emph{ASME Journal of Applied Mechanics}, vol.~60, pp. 948--953,
  1993.

\bibitem{B14}
M.~Benosman, ``Lyapunov-based control of the sway dynamics for
elevator
  ropes,'' \emph{IEEE, Transactions on Control Systems Technology}, 2014, to
  appear (online version available).

\bibitem{B14_1}
------, ``Lyapunov-based control of the sway dynamics for elevator ropes with
  time-varying lengths,'' in \emph{IFAC 17th Word Congress}, 2014, pp.
  5592--5597.

\bibitem{B14_2}
------, ``Lyapunov-based control of the sway dynamics for elevator ropes,'' in
  \emph{American Control Conference}, 2014, pp. 329--334.

\bibitem{M67}
L.~Meirovitch, \emph{Analytical methods in vibrations}, {F}red
{L}andis~ed.,
  ser. {A}pplied {M}echanics.\hskip 1em plus 0.5em minus 0.4em\relax Collier
  {M}acmillan publishers, 1967.

\bibitem{HLB96}
P.~Holmes, J.~LUmley, and G.~Berkooz, \emph{Turbulence, coherent
structures,
  dynamical systems and symmetry}.\hskip 1em plus 0.5em minus 0.4em\relax UK:
  {C}ambridge {U}niversity {P}ress, 1996.

\bibitem{K96}
H.~Khalil, \emph{Nonlinear systems}, 2nd~ed.\hskip 1em plus 0.5em
minus
  0.4em\relax New York Macmillan, 1996.

\bibitem{BL09-4}
M.~Benosman and K.-Y. Lum, ``Passive actuators' fault tolerant
control for
  affine nonlinear systems,'' \emph{IEEE, Transactions on Control Systems
  Technology}, vol.~18, 2010.

\bibitem{OYNFN02}
M.~Otsuki, K.~Yoshida, K.~Nagata, S.~Fujimoto, and T.~Nakagawa,
``Experimental
  study on vibration control for rope-sway of elevator of high-rise building,''
  in \emph{American Control Conference}, 2002, pp. 238--243.

\end{thebibliography}
\end{document}